\newtheorem{theorem}{Theorem}[section]
\newtheorem{corollary}{Corollary}[section]
\theoremstyle{remark}
\newtheorem*{remark}{Remark}
\theoremstyle{definition}
\newtheorem{definition}{Definition}[section]
\definecolor{brickred}{cmyk}{0,0.89,0.94,0.28}
\definecolor{goldenrod}{cmyk}{0,0.10,0.84,0}
\definecolor{purple}{cmyk}{0.45,0.86,0,0}
\definecolor{rawsienna}{cmyk}{0,0.72,1,0.45}
\definecolor{olivegreen}{cmyk}{0.64,0,0.95,0.40}
\definecolor{peach}{cmyk}{0,0.5,0.7,0}
\definecolor{darkolive}{rgb}{0.,0.4,0.}
\colorlet{grey}{gray!40}
\def\0{{\mathbf 0}}
\def\1{{\mathbf 1}}
\newlist{mylist}{enumerate}{3}
\setlist[mylist]{label={}}
\def\BibTeX{{\rm B\kern-.05em{\sc i\kern-.025em b}\kern-.08em
		T\kern-.1667em\lower.7ex\hbox{E}\kern-.125emX}}
\begin{document}
%
% paper title
% can use linebreaks \\ within to get better formatting as desired
\title{Estimating the Weight Enumerators of Reed-Muller Codes via Sampling}
%
%
% author names and IEEE memberships
% note positions of commas and nonbreaking spaces ( ~ ) LaTeX will not break
% a structure at a ~ so this keeps an author's name from being broken across
% two lines.
% use \thanks{} to gain access to the first footnote area
% a separate \thanks must be used for each paragraph as LaTeX2e's \thanks
% was not built to handle multiple paragraphs
%

\author{
	%	\IEEEauthorblockN{V.~Arvind~Rameshwar,~\IEEEmembership{Graduate Student Member,~IEEE,}
	%		Shreyas Jain,
	%		and Navin~Kashyap,~\IEEEmembership{Senior Member,~IEEE}}%}
	\IEEEauthorblockN{Shreyas Jain}
	\IEEEauthorblockA{Dept. of Mathematical Sciences\\ IISER Mohali\\
		Email: ms20098@iisermohali.ac.in}
	\and
	\IEEEauthorblockN{V.~Arvind~Rameshwar}
	\IEEEauthorblockA{India Urban Data Exchange\\
		Email: arvind.rameshwar@gmail.com}
	\and
	\IEEEauthorblockN{Navin Kashyap}
	\IEEEauthorblockA{Dept. of ECE\\ IISc, Bengaluru\\
		Email: nkashyap@iisc.ac.in}
	\thanks{This work was carried out when V.~A.~Rameshwar was a Ph.D. student at the Dept. of ECE, IISc, Bengaluru. The work of V.~A.~Rameshwar was supported in part by a Prime Minister’s Research Fellowship from the Ministry of Education, Govt. of India and in part by a Qualcomm Innovation Fellowship. The work of N.~Kashyap was supported in part by the grant DST/INT/RUS/RSF/P-41/2021 from the Department of Science \& Technology, Govt. of India.}}
\IEEEoverridecommandlockouts
% note the % following the last \IEEEmembership and also \thanks - 
% these prevent an unwanted space from occurring between the last author name
% and the end of the author line. i.e., if you had this:
% 
% \author{....lastname \thanks{...} \thanks{...} }
%                     ^------------^------------^----Do not want these spaces!
%
% a space would be appended to the last name and could cause every name on that
% line to be shifted left slightly. This is one of those "LaTeX things". For
% instance, "\textbf{A} \textbf{B}" will typeset as "A B" not "AB". To get
% "AB" then you have to do: "\textbf{A}\textbf{B}"
% \thanks is no different in this regard, so shield the last } of each \thanks
% that ends a line with a % and do not let a space in before the next \thanks.
% Spaces after \IEEEmembership other than the last one are OK (and needed) as
% you are supposed to have spaces between the names. For what it is worth,
% this is a minor point as most people would not even notice if the said evil
% space somehow managed to creep in.

% The paper headers
\markboth{}%
{Rameshwar, Jain, and Kashyap: Sampling-Based Weight Enumerator Estimates of RM Codes}
% The only time the second header will appear is for the odd numbered pages
% after the title page when using the twoside option.
% 
% *** Note that you probably will NOT want to include the author's ***
% *** name in the headers of peer review papers.                   ***
% You can use \ifCLASSOPTIONpeerreview for conditional compilation here if
% you desire.

% If you want to put a publisher's ID mark on the page you can do it like
% this:
%\IEEEpubid{0000--0000/00\$00.00~\copyright~2007 IEEE}
% Remember, if you use this you must call \IEEEpubidadjcol in the second
% column for its text to clear the IEEEpubid mark.

% use for special paper notices
%\IEEEspecialpapernotice{(Invited Paper)}

% make the title area
\maketitle

\begin{abstract}
%\boldmath
This paper develops an algorithmic approach for obtaining estimates of the weight enumerators of Reed-Muller (RM) codes. Our algorithm is based on a technique for estimating the partition functions of spin systems, which in turn employs a sampler that produces codewords according to a suitably defined Gibbs distribution. We apply our method to  moderate-blocklength RM codes and derive approximate values of their weight enumerators. We observe that the rates of the weight enumerator estimates returned by our method are close to the true rates when these rates are either known or computable by brute-force search; in other cases, our computations provide provably robust estimates. As a byproduct, our sampling algorithm also allows us to obtain estimates of the weight spectra of RM codes. We illustrate our methods by providing estimates of the hitherto unknown weight enumerators of the RM$(11,5)$ code and the exact weight spectra of the RM$(10,3)$ and RM$(10,4)$ codes.
\end{abstract}
% IEEEtran.cls defaults to using nonbold math in the Abstract.
% This preserves the distinction between vectors and scalars. However,
% if the journal you are submitting to favors bold math in the abstract,
% then you can use LaTeX's standard command \boldmath at the very start
% of the abstract to achieve this. Many IEEE journals frown on math
% in the abstract anyway.

% Note that keywords are not normally used for peerreview papers.

% For peer review papers, you can put extra information on the cover
% page as needed:
% \ifCLASSOPTIONpeerreview
% \begin{center} \bfseries EDICS Category: 3-BBND \end{center}
% \fi
%
% For peerreview papers, this IEEEtran command inserts a page break and
% creates the second title. It will be ignored for other modes.
\IEEEpeerreviewmaketitle

\section{Introduction}
% The very first letter is a 2 line initial drop letter followed
% by the rest of the first word in caps.
% 
% form to use if the first word consists of a single letter:
% \IEEEPARstart{A}{demo} file is ....
% 
% form to use if you need the single drop letter followed by
% normal text (unknown if ever used by IEEE):
% \IEEEPARstart{A}{}demo file is ....
% 
% Some journals put the first two words in caps:
% \IEEEPARstart{T}{his demo} file is ....
% 
% Here we have the typical use of a "T" for an initial drop letter
% and "HIS" in caps to complete the first word.
\IEEEPARstart{R}{eed}-Muller (RM) codes are a family of binary linear codes that are obtained by the evaluations of Boolean polynomials on the points of the Boolean hypercube. These algebraic codes have been of interest to practitioners for several decades, for their applications in deep-space to 5G cellular communications (see, e.g., \cite{deepspace,5g1}). Furthermore, recent breakthrough theoretical progress \cite{Reeves,abbesandon} has shown that RM codes are in fact capacity-achieving for general binary-input memoryless symmetric (BMS) channels, under both bitwise maximum a-posteriori probability (bit-MAP) and blockwise maximum a-posteriori probability (block-MAP) decoding. 

%However, these works do not consider  the setting where the memoryless communication or storage channel requires that its input codewords respect a certain constraint, which may stem from the physical limitations of the medium over which transmission or storage occurs (see \cite{Roth} for more on constrained systems and coding). Very little, in fact, is known about the design of good codes over such input-constrained stochastic noise channels.

Despite extensive study on the RM family of codes, a basic property that is still not completely understood is their weight enumerators, or the number of codewords having a given weight $\omega \in \{0,\ldots,N\}$, where $N$ is the blocklength of the code under consideration. Early progress on the weight distribution of RM codes, or the collection of weight enumerators, was made in \cite{kasamitokura,kasami2.5d}, which characterized all codewords of weight up to $2.5d$, where $d$ is the minimum weight of the code under consideration. Other works \cite{sloaneberlekamp,sugino,sugita} computed numerical values of, or analytical expressions for, the weight distribution of specific RM codes. Much later, a series of works \cite{kaufman,sam,sberlo,asw} derived analytical bounds on the weight enumerators of RM codes using ideas from the analysis of Boolean functions on the hypercube. More recently, the work \cite{anuprao} proposed simple upper bounds on the weight enumerators using the symmetry properties of the RM family of codes. On the algorithmic front, the techniques used for computing the weight enumerators of fixed blocklength RM codes primarily draw from \cite{sarwate}, which provides a recursive algorithm based on computing the weight enumerators of cosets of small RM codes that lie inside larger RM codes. However, this recursive procedure quickly becomes computationally intractable for even moderate-blocklength (blocklength around $1000$ symbols) RM codes.

%In this context, recent work by the authors \cite{arnk23tit} has explored constrained coding schemes derived from RM codes, for use over specific runlength limited (RLL) input-constrained BMS channels. The key idea in that work was the design of coding schemes using \emph{constrained subcodes} of RM codes over the input-constrained BMS channel; since the parent RM code is capacity-achieving over the (unconstrained) BMS channel, we have that constrained subcodes also have error probabilities going to zero as the blocklength goes to infinity, under optimal (ML) decoding. While the work \cite{arnk23tit} computed bounds on the rates of RLL constrained subcodes of RM codes, a natural question that arises is: what are the true sizes (or rates) of constrained subcodes of RM codes? Obtaining insights into this question will help us understand the rates achievable using RM codes over input-constrained BMS channels. 

In this paper, we seek to obtain approximate, numerical estimates of the weight enumerators, via sampling techniques. In particular, via estimates of the sizes of constant-weight subcodes of RM codes, we arrive at estimates of the weight distribution of moderate-blocklength RM codes, thereby making progress on a wide-open research problem. Our technique makes use of a simple statistical physics approach for estimating the partition functions of spin systems. Such an approximate counting technique has been largely unexplored in the coding theory literature, and we believe that there is scope for its broader application to other problems of interest in coding theory (see \cite{gamal} for some early work on using simulated annealing as a heuristic for identifying good short-blocklength error-correcting codes). The crux of this approach is the employment of a Monte-Carlo Markov Chain (MCMC) sampler that draws codewords according to a suitably biased Gibbs distribution. Importantly, our sampler involves a ``nearest-neighbour'' proposal distribution, which uses minimum-weight codewords of RM codes. Our sampler, for sufficiently large ``inverse temperature'' parameters, can produce samples from exponentially-small (compared to the size of the parent RM code) subcodes of RM codes, and is hence of independent interest. We mention that we have recently employed such techniques for estimating the sizes of runlength limited constrained subcodes of RM codes in \cite{arsjnk24ncc}. In this paper, we also present a simple algorithm that uses our sampler for estimating the weight spectra of RM codes, or the collection of weights with positive weight enumerators.

%In this paper, we work with RLL and constant-weight constraints and seek to obtain estimates of the sizes of such constrained subcodes of RM codes. 
As illustrations of our methods, we provide estimates of (rates of) the weight enumerators of moderate-blocklength (blocklength $N\leq 2048$) RM codes, and compare these estimates with the true rates that are either known from the literature or computed via exact counting algorithms. We then use our algorithm to obtain estimates of the hitherto-unknown weight enumerators of the RM$(11,5)$ code and also derive the exact weight spectrum (also unknown) of the RM$(10,3)$ and RM$(10,4)$ codes. In the appendix, we provide theoretical guarantees of the sample complexity of our weight enumerator estimation algorithm (for a fixed error in approximation), and demonstrate that the number of samples, and hence the time taken to run the estimation algorithm, is only polynomial in the blocklength of the RM code, and is independent of its dimension.

\section{Preliminaries and Notation}
%The notation $[n]$ denotes the set, $\{1,2,\ldots,n\}$, of integers, and the notation $[a:b]$, for $a<b$, denotes the set of integers $\{a,a+1,\ldots,b\}$. Moreover, for a real number $x$, we use $\left \lceil x \right \rceil$ to denote the smallest integer larger than or equal to $x$. 
We use $\mathbb{F}_2$ to denote the binary field, i.e., the set $\{0,1\}$ equipped with modulo-$2$ arithmetic. 
We use bold letters such as $\mathbf{x}$, $\mathbf{y}$ to denote finite-length binary sequences (or vectors); the set of all finite-length binary sequences is denoted by $\{0,1\}^\star$. 
%Given vectors $\mathbf{x}, \mathbf{y}$ of lengths $n$ and $m$, respectively, we denote their concatenation by the $(m+n)$-length vector $(\mathbf{x}, \mathbf{y})$. 
Further, when $\mathbf{x}, \mathbf{y} \in \mathbb{F}_2^n$, we denote by $\mathbf{x}+\mathbf{y}$ the vector resulting from component-wise modulo-$2$ addition. We also use the notation $w_H(\mathbf{x})$ to denote the Hamming weight of $\mathbf{x}\in \mathbb{F}^n$, which is the number of nonzero coordinates in $\mathbf{x}$. We define the indicator function of a set $\mathcal{A}$ as $\mathds{1}_\mathcal{A}$, with $\mathds{1}_\mathcal{A}(\mathbf{x}) = 1$, when $\mathbf{x}\in \mathcal{A}$, and $0$, otherwise. We use exp$(x)$ to denote $e^x$, for $x\in \mathbb{R}$. For sequences $(a_n)_{n\geq 1}$ and $(b_n)_{n\geq 1}$ of positive reals, we say that $a_n = O(b_n)$, if there exists $n_0\in \mathbb{N}$ and a positive real $M$, such that $a_n\leq M\cdot b_n$, for all $n\geq n_0$. We say that $a_n = \Theta(b_n)$, if there exist positive reals $M_1, M_2$ such that $M_1\cdot b_n \leq a_n\leq M_2\cdot b_n$ for all sufficiently large $n$. 

\subsection{Reed-Muller Codes}
\label{sec:rmintro}
We now recall the definition of the binary Reed-Muller (RM) family of codes and some of their basic facts that are relevant to this work. Codewords of binary RM codes consist of the evaluation vectors of multivariate polynomials over the binary field $\mathbb{F}_2$. Consider the polynomial ring $\mathbb{F}_2[x_1,x_2,\ldots,x_m]$ in $m$ variables. Note that any polynomial $f\in \mathbb{F}_2[x_1,x_2,\ldots,x_m]$ can be expressed as the sum of {monomials} of the form $\prod_{j\in S:S\subseteq [m]} x_j$, since $x^2 = x$ over the field $\mathbb{F}_2$. For a polynomial $f\in \mathbb{F}_2[x_1,x_2,\ldots,x_m]$ and a binary vector $\mathbf{z} = (z_1,\ldots,z_m)\in \mathbb{F}_2^m$, we write {$f(\mathbf{z})=f(z_1,\ldots,z_m)$} as the evaluation of $f$ at $\mathbf{z}$. The evaluation points are ordered according to the standard lexicographic order on strings in $\mathbb{F}_2^m$, i.e., if $\mathbf{z} = (z_1,\ldots,z_m)$ and $\mathbf{z}^{\prime} = (z_1^{\prime},\ldots,z_m^{\prime})$ are two evaluation points, then, $\mathbf{z}$ occurs before $\mathbf{z}^{\prime}$ iff for some $i\geq 1$, we have $z_j = z_j^{\prime}$ for all $j<i$, and $z_i < z_i^{\prime}$. Now, let Eval$(f):=\left({f(\mathbf{z})}:\mathbf{z}\in \mathbb{F}_2^m\right)$ be the evaluation vector of $f$, where the coordinates $\mathbf{z}$ are ordered according to the standard lexicographic order. 

\begin{definition}[see Chap. 13 in \cite{mws}, or \cite{rm_survey}]
	{For $0\leq r\leq m$}, the $r^{\text{th}}$-order binary Reed-Muller code RM$(m,r)$ is defined as
	\[
	\text{RM}(m,r):=\{\text{Eval}(f): f\in \mathbb{F}_2[x_1,x_2,\ldots,x_m],\ \text{deg}(f)\leq r\},
	\]
	where $\text{deg}(f)$ is the degree of the largest monomial in $f$, and the degree of a monomial $\prod_{j\in S: S\subseteq [m]} x_j$ is simply $|S|$. 
\end{definition}

It is known that the evaluation vectors of all the distinct monomials in the variables $x_1,\ldots, x_m$ are linearly independent over $\mathbb{F}_2$. Hence, RM$(m,r)$ has dimension $\binom{m}{\le r} := \sum_{i=0}^{r}{m \choose i}$. We then have that RM$(m,r)$ is a $\left[2^m,\binom{m}{\le r}\right]$ linear code. {Furthermore, the dual code of RM$(m,r)$ is RM$(m,m-r-1)$, for $m\geq 1$ and $r\leq m-1$.} It is also known that RM$(m,r)$ has minimum Hamming distance ${d}_{\text{min}}(\text{RM}(m,r))=2^{m-r}$. Of importance in this paper is the fact that each minimum-weight codeword of RM$(m,r)$ is the characteristic vector of an $(m-r)$-dimensional affine subspace of $\mathbb{F}_2^m$ (see \cite[Thm. 8, Chap. 13, p. 380]{mws}) , i.e., the vector having ones in those coordinates $\mathbf{z}\in \{0,1\}^m$ that lie in the affine subspace. We use this fact to efficiently sample  a uniformly random minimum-weight codeword of a given RM code. Another fact of use to us is that the collection of minimum-weight codewords spans RM$(m,r)$, for any $m\geq 1$ and $r\leq m$ (see \cite[Thm. 12, Chap. 13, p. 385]{mws}).
\section{Sampling-Based Algorithms}
\label{sec:estimate}
In this section, we discuss our sampling-based procedure for computing estimates of weight enumerators of RM codes, or equivalently, of the sizes of constant-weight subcodes of RM codes. As a by product, we also obtain a simple algorithm for computing estimates of the weight spectrum. We first present a general approach for obtaining the weight enumerator estimates for a given $[n,k]$ linear code $\mathcal{C}$ of blocklength $n$ and dimension $k$. We mention that while the technique described here applies to \emph{any} code $\mathcal{C}$ (not necessarily linear), we restrict our attention to RM codes in this paper, since in this family of codes, it is possible to generate samples from the distribution $p_\beta$ (see \eqref{eq:pb}) in a computationally efficient manner, as will be required later (see Section \ref{sec:sampler}). 
%We begin with the weight enumerator estimates.
%We are primarily interested in RLL and constant-weight constraints.
%\subsection{A Sampling-Based Procedure for Computing Size Estimates of Constrained Subcodes}

Given an $[n,k]$ linear code $\mathcal{C}$, for any integer $0\leq \omega\leq n$, let $\mathcal{C}^{(\omega)}\subseteq \mathcal{C}$ denote the set of codewords of Hamming weight exactly $\omega$. We are interested in obtaining estimates of the quantity
$
	A(\omega):= |\mathcal{C}^{(\omega)}| = \sum_{\mathbf{c}\in \mathcal{C}} \mathds{1}\{\mathbf{c}\in W^{(\omega)}\},
$
where $W^{(\omega)}\subseteq \{0,1\}^n$ is the collection of all length-$n$ binary sequences of weight $\omega$. Further, let $S$ denote the weight spectrum of $\mathcal{C}$, i.e., $S = \{0\leq \omega\leq n: A(\omega)>0\}$. For a fixed $\omega$, we call the quantity $A(\omega)$ as $Z$, to establish similarity with the notation for the partition function of the probability distribution $p$, where
\[
p(\mathbf{x}) = \frac{1}{Z}\cdot \mathds{1}_{\mathcal{C}^{(\omega)}}(\mathbf{x}),\ \mathbf{x}\in \{0,1\}^n.
\]

When the dimension $k$ of $\mathcal{C}$ is large, a direct calculation of $Z$ is computationally intractable, as it involves roughly $\min\{2^k,2^{n-k}\}$ additions \cite{macwilliams}. \footnote{We mention that the algorithmic question of deciding if $A(\omega)>0$ (resp. computing $A(\omega)$) is equivalent to deciding if a suitably defined constraint satisfaction problem (CSP) has a solution (resp. counting the number of solutions to the CSP). There is vast literature on the computational complexity of CSPs (see, e.g., \cite{chen1,chen2} and references therein). One interesting direction of research will be to comment on the complexity of computing $A(\omega)$ for RM$(m,r)$, when say $r$ grows with $m$, using the dichotomy results in \cite{schaefer,creignou}.} We first focus on computing good approximations to $Z$; as a byproduct, we will come up with a simple algorithm for estimating $S$.
%The following proposition asserts that the computation of $A(\omega) = A_{m,r}(\omega)$ for a Reed-Muller code RM$(m,r)$ is provably hard for general values of $m,r,$ and $\omega$.
%
%\begin{proposition}
%	The counting problem of computing $A_{m,r}(\omega)$ is \#P-complete.
%\end{proposition}
%The proof (provided in Section I of the appendix) relies on transforming the problem of computing $A_{m,r}(\omega)$ to the problem of counting satisfying instances of a suitably defined constraint satisfaction problem (CSP)---a fundamental problem in computer science whose complexity has received much attention \cite{federvardi,dyerricherby,bulatov}.

%Via MacWilliams' identities \cite{macwilliams}, we have that it suffices to focus on codes $\mathcal{C}$ of dimension $k\leq n/2$. 
%The work in \cite{arnk23jsait} transformed this counting problem to one in the space of the dual code $\mathcal{C}^\perp$; however, even if the Fourier transform (over $\mathbb{F}_2^n$) of $\mathds{1}_\mathcal{A}$ were easy to compute, we would still have to perform $2^{n-k}$ additions. 

Before we do so, we need some additional background and notation. Consider the following probability distribution supported on the codewords of $\mathcal{C}$:
\begin{equation}
	\label{eq:pb}
p_\beta(\mathbf{x}) = \frac{1}{Z_\beta}\cdot e^{-\beta\cdot E(\mathbf{x})}\cdot \mathds{1}_\mathcal{C}(\mathbf{x}), \quad \mathbf{x}\in \{0,1\}^n,
\end{equation}
where $\beta>0$ is some fixed real number (in statistical physics, $\beta$ is termed as ``inverse temperature''), and $E: \{0,1\}^n\to [0,\infty)$ is an ``energy function'' such that $E(\mathbf{x}) = 0$ if $\mathbf{x}\in {W}^{(\omega)}$ and is strictly positive, otherwise. For a given weight $0\leq \omega\leq n$, we define the energy function $E = E^{(\omega)}$ as
$
E^{(\omega)}(\mathbf{x}) = |w_H(\mathbf{x})-\omega|.
$
Note that the ``partition function'' or the normalization constant
\begin{equation}
	\label{eq:Zb}
	Z_\beta = \sum_{\mathbf{c}\in \mathcal{C}} e^{-\beta\cdot E(\mathbf{x})}.
\end{equation}
In the limit as $\beta\to \infty$, it can be argued that the distribution $p_\beta$ becomes the uniform distribution over the ``ground states'' or zero-energy vectors in $\mathcal{C}$ (see, e.g., \cite[Chapter 2]{mezard}); more precisely,
\begin{equation}
	\label{eq:limit}
\lim_{\beta\to \infty} p_\beta(\mathbf{x}) = p(\mathbf{x}).
\end{equation}
Clearly, by the definition of the energy function, we also have that $Z = \lim_{\beta\to \infty}Z_\beta$. We shall use this perspective to compute an approximation to $Z$. 
%While we have defined the energy function $E$ with arbitrary constraints in mind, we first illustrate examples of $E$ for the two constraints of interest in this paper.
%\begin{example}
%	\label{eg:rll}
%	For the $(d,\infty)$-RLL constraint $S^{(d)}$, we define the energy function $E = E^{(d)}$ to be the number of indices where there is a violation of the constraint, i.e., for $\mathbf{x}\in \{0,1\}^n$,
%	\[
%	E^{(d)}(\mathbf{x}) = \#\{j:\ (x_j,x_{j+i})=(1,1), \text{ for }1\leq j\leq n-d\text{ and some }1\leq i\leq d\}.
%	\]
%	Clearly, $E^{(d)}$ equals $0$ at codewords that lie in $S^{(d)}$, and is strictly positive, otherwise. 
%	We let $p_\beta^{(d)}$ be the probability distribution induced on codewords of $\mathcal{C}$ using the energy function $E^{(d)}$ as in \eqref{eq:pb}.
%\end{example}
%\begin{example}
%	\label{eg:wt}

	%We let $p_\beta^{(\omega)}$ be the probability distribution induced on codewords of $\mathcal{C}$ using the energy function $E^{(\omega)}$ as in \eqref{eq:pb}.
%\end{example}

\subsection{Algorithms for Computing $Z_{\beta^\star}$}
\label{sec:practice}
In this section, following \eqref{eq:limit}, we shall use the partition function $Z_{\beta^\star}$, where $\beta^\star$ is suitably large, as an estimate of $Z$ (the question of how large $\beta^\star$ must be for $Z_{\beta^\star}$ to be a good approximation to $Z = A(\omega)$ is taken up in Appendix \ref{sec:theory}). To accomplish this, we shall outline a fairly standard method from the statistical physics literature \cite{statphy} (see also Lecture 4 in \cite{sinclair}) to compute $Z_{\beta^\star}$ corresponding to the two constraints defined above, when $\beta^\star$ is large.   We provide a qualitative description of the procedure in this section; the exact values of the parameters required to guarantee a close-enough estimate will be provided in Section I of the appendix.

The key idea in this method is to express $Z_{\beta^\star}$ as a telescoping product of ratios of partition functions, for smaller values of $\beta$. We define a sequence (or a ``cooling schedule'')
$
0 = \beta_0< \beta_1< \ldots < \beta_\ell= \beta^\star,
$
where $\beta_i = \beta_{i-1}+\frac{1}{n}$, $1\leq i\leq \ell$, and $\ell$ is a large positive integer, and write 
\begin{equation}
	\label{eq:zbstar}
	Z_{\beta^\star} = Z_{\beta_0}\times\prod_{i=1}^{\ell}\frac{Z_{\beta_i}}{Z_{\beta_{i-1}}}.
\end{equation}
Observe from \eqref{eq:Zb} that $Z_{\beta_0} = Z_0 = |\mathcal{C}|=2^k$, where $k$ is the dimension of $\mathcal{C}$. Now, for $1\leq i\leq\ell$, we have from \eqref{eq:pb} that
\begin{align}
	\frac{Z_{\beta_i}}{Z_{\beta_{i-1}}} &=  \frac{1}{Z_{\beta_{i-1}}}\sum_{\mathbf{c}\in \mathcal{C}} \text{exp}(-\beta_iE(\mathbf{c})) \notag\\
	&= \frac{1}{Z_{\beta_{i-1}}}\sum_{\mathbf{c}\in \mathcal{C}}\text{exp}(-\beta_{i-1}E(\mathbf{c}))\cdot \text{exp}((\beta_{i-1}-\beta_i)E(\mathbf{c})) \notag\\
	&= \mathbb{E}[\text{exp}(-E(\mathbf{c})/n)] \label{eq:expectation},
\end{align}
where the expectation is over codewords $\mathbf{c}$ drawn according to $p_{\beta_{i-1}}$. In other words, the ratio $\frac{Z_{\beta_i}}{Z_{\beta_{i-1}}}$ can be computed as the expected value of a random variable $X_i:=\text{exp}(-E(\mathbf{c})/n)$, where $\mathbf{c}$ is drawn according to $p_{\beta_{i-1}}$. A description of how such a random variable can be sampled is provided in Algorithm \ref{alg:sample} in Section \ref{sec:sampler}; it is here that we use the fact that $\mathcal{C}$ is an RM code, which allows for efficient sampling from $p_\beta$, for a given $\beta$ (Algorithm \ref{alg:sample} also gives us a simple algorithm for estimating the weight spectrum). We now explain how \eqref{eq:expectation} is used to obtain an estimate of $Z_{\beta^\star}$ in \eqref{eq:zbstar}. For every $i$, for large $t$, we sample i.i.d. random variables $X_{i,j}$, $1\leq j\leq t$, which have the same distribution as $X_i$. We ensure that the $X_{i,j}$s are independent across $i$ as well. We then estimate the expected value in \eqref{eq:expectation} by a sample average, i.e., we define the random variable
\begin{equation}
	\label{eq:sampleav}
Y_i:= \frac{1}{t}\sum_{j=1}^t X_{i,j}.
\end{equation}
%for large $t$, where $X_{i,j}$ are i.i.d. random variables having the same distribution as $X_i$. 
%Clearly, by the Strong Law of Large Numbers, $Y_i$ converges almost surely to the expected value in \eqref{eq:expectation}. 
Finally, the estimate for $Z_{\beta^\star}$ (see \eqref{eq:zbstar}) that we shall use is
\begin{equation}
	\label{eq:estimate}
\widehat{Z}_{\beta^\star} = Z_{\beta_0}\times \prod_{i=1}^\ell Y_i.
\end{equation}
Note that we then have, by independence of the $X_{i,j}$s and hence of the $Y_i$s, that $\mathbb{E}[\widehat{Z}_{\beta^\star}] = Z_0\times \prod_{i=1}^\ell \mathbb{E}[Y_i] = Z_{\beta^\star}$. A summary of our algorithm is shown as Algorithm \ref{alg:estimate}. 

In Appendix \ref{sec:theory}, we argue that it suffices to set $\beta^\star = \Theta(n^2)$ and $t = \Theta(n^3)$ to guarantee that $\widehat{Z}_{\beta^\star}$ is close to $Z = A(\omega)$.

%We now return to the question of sampling a codeword from $\mathcal{C}$, efficiently, according to the distribution $p_\beta$, for any $\beta>0$, where $\mathcal{C}$ is now fixed to be RM$(m,r)$, for some $m\geq 1$ and $r\leq m$.

\begin{algorithm}[t]
	\caption{Estimating $Z$ via $Z_{\beta^\star}$}
	\label{alg:estimate}
	\begin{algorithmic}[1]	
		%		\State: Construct a one-one mapping $\phi_m: \mathbb{F}_2^{N_m-K_m}\to \mathcal{C}_{\log_2\left(\frac{N_m-K_m}{2^{-\left \lceil \log_2(d+1)\right \rceil}R}\right)}(R)$, where $\mathcal{C}_{\log_2\left(\frac{N_m-K_m}{2^{-\left \lceil \log_2(d+1)\right \rceil}R}\right)}(R)$ is constructed as in \eqref{eq:rmlb1}.
		\Procedure{Estimator}{$\beta^\star$}
		\State Fix a cooling schedule $0=\beta_0<\beta_1<\ldots<\beta_\ell = \beta^\star$.
		\State Fix a large $t\in \mathbb{N}$.
		\For{$i=1:\ell$}
		\State Use Algorithm \ref{alg:sample} to generate $t$ i.i.d. samples $\mathbf{c}_{i,1},\ldots,\mathbf{c}_{i,t}$.
		\State For $1\leq j\leq t$, set $X_{i,j} \leftarrow \text{exp}((\beta_{i-1}-\beta_i)E(\mathbf{c}_{i,j}))$.
		\State Compute $Y_i = \frac{1}{t}\sum_{j=1}^t X_{i,j}$.
		\EndFor
		\State Output $\widehat{Z}_{\beta^\star} = |\mathcal{C}|\times \prod_{i=1}^\ell Y_i$.
		\EndProcedure	
	\end{algorithmic}
\end{algorithm} 
\subsection{An Algorithm for Sampling RM Codewords According to $p_\beta$}
\label{sec:sampler}
Our approach to generating samples from the distribution $p_\beta$, when $\mathcal{C}$ is a Reed-Muller code is a simple ``nearest-neighbour'' Metropolis algorithm, which is a special instance of Monte Carlo Markov Chain (MCMC) methods (see Chapter 3 in \cite{mcmcbook}).

Let $\Delta$ be the collection of minimum-weight codewords in $\mathcal{C}$. Consider the following ``symmetric proposal distribution'' $\{P(\mathbf{c}_1,\mathbf{c}_2):\ \mathbf{c}_1,\mathbf{c}_2\in \mathcal{C}\}$, where $P(\mathbf{c}_1,\mathbf{c}_2)$ is the conditional probability of ``proposing'' codeword $\mathbf{c}_2$ given that we are at codeword $\mathbf{c}_1$:
\begin{equation}
	\label{eq:proposal}
P(\mathbf{c}_1, \mathbf{c}_2) = 
\begin{cases}
	\frac{1}{|\Delta|},\ \text{if $\mathbf{c}_2 = \mathbf{c}_1+\overline{\mathbf{c}}$, for some $\overline{\mathbf{c}}\in \Delta$},\\
	0,\ \text{otherwise}.
\end{cases}
\end{equation}
Clearly, $P$ is symmetric in that $P(\mathbf{c}_1, \mathbf{c}_2) = P(\mathbf{c}_2, \mathbf{c}_1)$, for all $\mathbf{c}_1, \mathbf{c}_2\in \mathcal{C}$. Our Metropolis algorithm begins at a randomly initialized codeword. When the algorithm is at codeword $\mathbf{c}_1$, it ``accepts'' the proposal of codeword $\mathbf{c}_2$ with probability $\min\left(1,\frac{p_\beta(\mathbf{c}_2)}{p_\beta(\mathbf{c}_1)}\right)$, and moves to $\mathbf{c}_2$. Now, observe that since $\mathcal{C}$ is a Reed-Muller code, it is easy to sample a codeword $\mathbf{c}_2$ that differs from $\mathbf{c}_1$ by a minimum-weight codeword; in other words one can efficiently sample a uniformly random minimum-weight codeword $\overline{\mathbf{c}}$ (see Section \ref{sec:rmintro} for a characterization of minimum-weight codewords of RM codes). This sampling procedure is shown as Steps 5--6 in Algorithm \ref{alg:sample}, with $\overline{\mathbf{c}}$ in Step 7 representing the minimum-weight codeword sampled. 
Note that the full-rank matrix $A$ in Step 5 of the algorithm can be sampled by using a rejection sampling procedure (see, for example, Appendix B.5 in \cite{mcmcbook}), in a number of steps that is a constant, in expectation (see sequence A048651 in \cite{oeis}).
 
%Returning to our Metropolis algorithm, we have that our \emph{Metropolis chain} has as transition probability matrix the $|\mathcal{C}|\times|\mathcal{C}|$ matrix $Q$, where
%\begin{equation}
%	\label{eq:metropolis}
%Q(\mathbf{c}_1,\mathbf{c}_2) = 
%\begin{cases}
%	P(\mathbf{c}_1, \mathbf{c}_2) \cdot \min\left(1,\frac{p_\beta(\mathbf{c}_2)}{p_\beta(\mathbf{c}_1)}\right),\ \text{if $\mathbf{c}_1\neq \mathbf{c}_2$},\\
%	1-\sum_{\mathbf{c}\neq \mathbf{c}_1}P(\mathbf{c}_1,\mathbf{c})\cdot \min\left(1,\frac{p_\beta(\mathbf{c})}{p_\beta(\mathbf{c}_1)}\right),\ \text{otherwise}.
%\end{cases}
%\end{equation}
It can be checked that $p_\beta$ is indeed a stationary distribution of this chain. Further, suppose that $\mathbf{c}^{(\tau)}$ is the (random) codeword that this chain is at, at time $\tau\in \mathbb{N}$. Then, it is well-known that if the Metropolis chain is irreducible and aperiodic (and hence ergodic), then the distribution of $\mathbf{c}_\tau$ is close, in total variational distance, to the stationary distribution $p_\beta$ (see, e.g., Theorem 4.9 in \cite{mcmcbook}), for large enough $\tau$.\footnote{In this work, we do not address the question of how large $\tau$ must be, but simply set $\tau$ to be large enough so that the Metropolis chain reaches the ``zero-energy'' constrained codewords within $\tau$ steps, in practice, starting from an arbitrary initial codeword.}

Now, since the set of minimum-weight codewords $\Delta$ spans $\mathcal{C}$, we have that the chain is irreducible. Further, for some selected weights $\omega$, we can argue that there always exists a pair of codewords $(\mathbf{c}_1,\mathbf{c}_2)$ such that  $\mathbf{c}_2 = \mathbf{c}_1+\overline{\mathbf{c}}$, for some $\overline{\mathbf{c}}\in \Delta$, with $p_\beta(\mathbf{c}_2)<p_\beta(\mathbf{c}_1)$. We then get that $Q(\mathbf{c}_1,\mathbf{c}_1)>0$, assuring us of aperiodicity, and hence of ergodicity, of our chain. However, for the purposes of this work, we do not concern ourselves with explicitly proving aperiodicity, and instead seek to test the soundness of our technique, numerically. 
%Our algorithm for sampling codewords (approximately) from $p_\beta$ is given as Algorithm \ref{alg:sample}.
 \begin{algorithm}[t]
 	\caption{Sampling RM codewords approximately from $p_\beta$}
 	\label{alg:sample}
 	\begin{algorithmic}[1]	
 		%		\State: Construct a one-one mapping $\phi_m: \mathbb{F}_2^{N_m-K_m}\to \mathcal{C}_{\log_2\left(\frac{N_m-K_m}{2^{-\left \lceil \log_2(d+1)\right \rceil}R}\right)}(R)$, where $\mathcal{C}_{\log_2\left(\frac{N_m-K_m}{2^{-\left \lceil \log_2(d+1)\right \rceil}R}\right)}(R)$ is constructed as in \eqref{eq:rmlb1}.
 		\Procedure{Metropolis-Sampler}{$\mathbf{c}^{(0)}$, $\beta$, $E$}       
 		\State Initialize the Metropolis chain at the arbitrary (fixed) codeword $\mathbf{c}^{(0)}$.
 		\State Fix a large $\tau\in \mathbb{N}$.
 		\For{$i=1:\tau$}
 		\State Generate a uniformly random $(m-r)\times m$ full-rank $0$-$1$ matrix $A$ and a uniformly random vector $\mathbf{b}\in \{0,1\}^n$.
 		\State Construct $H = \{\mathbf{z}: \mathbf{z}=\mathbf{x}\cdot A+\mathbf{b},\text{ for some $\mathbf{x}\in \mathbb{F}_2^{m-r}$}\}$.
 		\State Set $\overline{\mathbf{c}}$ to be the characteristic vector of $H$ and set $\mathbf{c}\leftarrow \mathbf{c}^{(i-1)}+\overline{\mathbf{c}}$.
 		%\State Propose a codeword $\mathbf{c}$ according to the distribution $P(\mathbf{c}^{(i-1)},\cdot)$ in \eqref{eq:proposal}.
 		\State Set $\mathbf{c}^{(i)} \leftarrow \mathbf{c}$ with probability $\min\left(1,\text{exp}(-\beta(E(\mathbf{c})-E(\mathbf{c}^{(i-1)})))\right)$; else set $\mathbf{c}^{(i)} \leftarrow \mathbf{c}^{(i-1)}$.
 		\EndFor
 		\State Output $\mathbf{c}_\tau$.
 		\EndProcedure	
 	\end{algorithmic}
 \end{algorithm} 

\subsection{Algorithm for Computing Weight Spectrum Estimate}
Besides allowing us to estimate the weight enumerator $Z = A(\omega)$, for a fixed $0\leq\omega\leq n$ via $\widehat{Z}_{\beta^\star}$, the sampling procedure in Algorithm \ref{alg:sample} allows us check if there exists a codeword of weight $\omega$ or not; in other words, the sampler allows us to obtain the weight spectrum of the code $\mathcal{C}$. This procedure relies on the fact that for a fixed, large value of $\beta^\star$, the distribution $p_{\beta^\star}$ is close (in total variational distance) to $p$ (see \eqref{eq:limit}), which is supported only on codewords of weight $\omega$. We hence simply draw a codeword $\mathbf{c}$ from $p_{\beta^\star}$ and check if $w_H(\mathbf{c}) = \omega$; from the previous observation, we expect that with high probability this is indeed true. Our algorithm for checking if there exists a codeword of weight $\omega$ is given as Algorithm \ref{alg:check}. Our estimate for the true weight spectrum $S$ is the set $\hat{S}$ that aggregates all weights $\omega$ where Algorithm \ref{alg:check} outputs \texttt{Yes}. Clearly, $\hat{S}\subseteq S$.

\begin{algorithm}[t]
	\caption{Checking if $\omega$ has a positive weight enumerator}
	\label{alg:check}
	\begin{algorithmic}[1]	
		%		\State: Construct a one-one mapping $\phi_m: \mathbb{F}_2^{N_m-K_m}\to \mathcal{C}_{\log_2\left(\frac{N_m-K_m}{2^{-\left \lceil \log_2(d+1)\right \rceil}R}\right)}(R)$, where $\mathcal{C}_{\log_2\left(\frac{N_m-K_m}{2^{-\left \lceil \log_2(d+1)\right \rceil}R}\right)}(R)$ is constructed as in \eqref{eq:rmlb1}.
		\Procedure{WeightCheck}{$\beta^\star$}
		\State Fix a large $\beta^\star$.
		\State Use Algorithm \ref{alg:sample} to generate a sample $\mathbf{c}$.
		\If{$w_H(\mathbf{c}) = \omega$} output \texttt{Yes}
		\Else \ output \texttt{No}
		\EndIf
		\EndProcedure	
	\end{algorithmic}
\end{algorithm} 
\section{Numerical Examples}
\label{sec:numerics}
In this section, we shall apply a variant of Algorithm \ref{alg:estimate} to compute estimates of the weight enumerators, and Algorithm \ref{alg:check} to obtain estimates of the weight spectrum of specific moderate-blocklength RM codes. The method we use for computing weight enumerator estimates in this section (shown as Algorithm \ref{alg:estimateprac})
%\footnote{ \url{https://github.com/shreyas7411/Sampling-Based-Estimates-of-the-Sizes-of-Constrained-Subcodes-of-Reed-Muller-Codes/tree/main}.}
differs from Algorithm \ref{alg:estimate} in that we do not pick a value of $\ell$ (which determines the cooling schedule completely) in advance. Instead, we shall iterate the loop in Step 4 of Algorithm \ref{alg:estimate} and keep updating the estimate $\widehat{Z}_{\beta^\star}$ until it settles to within a precribed precision $\delta\in (0,1)$. Computer code for these algorithms, written in Julia, Python, and MATLAB, can be found at \cite{sjarnkgit}.

\begin{algorithm}[t]
	\caption{Estimating $A(\omega)$ via $\widehat{Z}$}
	\label{alg:estimateprac}
	\begin{algorithmic}[1]	
		%		\State: Construct a one-one mapping $\phi_m: \mathbb{F}_2^{N_m-K_m}\to \mathcal{C}_{\log_2\left(\frac{N_m-K_m}{2^{-\left \lceil \log_2(d+1)\right \rceil}R}\right)}(R)$, where $\mathcal{C}_{\log_2\left(\frac{N_m-K_m}{2^{-\left \lceil \log_2(d+1)\right \rceil}R}\right)}(R)$ is constructed as in \eqref{eq:rmlb1}.
		\Procedure{Estimator}{}
		%\State Fix a cooling schedule $0=\beta_0<\beta_1<\ldots<\beta_\ell = \beta^\star$, as in \eqref{eq:betas}.
		\State Fix a large $t\in \mathbb{N}$.
		\State Fix a (small) precision $\delta\in (0,1)$ and set $\beta \leftarrow 0$.
		\State Set \texttt{curr}$\ \leftarrow |\mathcal{C}|$ and \texttt{prev}$\ \leftarrow 0$.
		\While{$|\text{\texttt{curr}}-\text{\texttt{prev}}|>\delta$}
		\State Increment $\beta \leftarrow \beta+1/n$.
		\State Draw $t$ samples $\mathbf{c}_{1},\ldots,\mathbf{c}_{t}$ i.i.d. from $p_\beta$ using Algorithm \ref{alg:sample}.
		\State For $1\leq j\leq t$, set $X_{j} \leftarrow \text{exp}(-E(\mathbf{c}_{j})/n)$.
		\State Compute $Y = \frac{1}{t}\sum_{j=1}^t X_{j}$.
		\State Update \texttt{prev}$\ \leftarrow$ \texttt{curr} and \texttt{curr}$\ \leftarrow Y\cdot \text{\texttt{curr}}$.
		\EndWhile
		\State Output $\widehat{Z} = \text{\texttt{curr}}$.
		\EndProcedure	
	\end{algorithmic}
\end{algorithm} 

\subsection{Weight Enumerator Estimates}
Let us denote by $(A_{m,r}(\omega): 0\leq \omega\leq 2^m)$ the collection of weight enumerators, or equivalently, the weight distribution of RM$(m,r)$. We use the following fact (see, e.g., the survey \cite{rm_survey}) to ease computation: the weight distribution is symmetric about $\omega = 2^{m-1}$, i.e., $A_{m,r}(\omega) = A_{m,r}(n-\omega)$, for $0\leq \omega\leq 2^m$. {Using the fact that the dual code of RM$(m,\rho)$ is RM$(m,m-\rho-1)$, for $\rho\leq m-1$, we can obtain estimates of the weight distribution of RM$(m,r)$, for $ \left \lfloor \frac{m-1}{2}\right \rfloor < r \leq m-1$ by plugging in the estimates of the weight distribution of the corresponding dual code into MacWilliams' identities \cite{macwilliams}. We therefore confine our attention to computing estimates of the weight distributions of self-dual RM codes of the form RM$(m,\left \lfloor \frac{m-1}{2}\right \rfloor)$, for odd $m$, since these codes have the largest dimension among those codes RM$(m,r)$ where $r\leq \left \lfloor \frac{m-1}{2}\right \rfloor$.} {For such codes, it is known that those weight enumerators at weights not divisible by 4 are zero (see \cite[Cor. 13, Chap. 15, p. 447]{mws}). Hence, in the sequel, we shall only compute estimates of the weight enumerators of self-dual RM codes, at weights $2^{m-r}\leq \omega\leq 2^{m-1}$, {such that $\omega$ is divisible by $4$} (recall that the minimum distance of RM$(m,r)$ is $2^{m-r}$). We remark that for selected RM codes that are not self-dual, the knowledge of the weight spectrum (see, e.g., \cite{weight_spectrum}) can be used to inform the computation of weight estimates.} Furthermore, when we compare our size or rate estimates with the true weight enumerators $A_{m,r}(\omega)$ or rates $\frac{1}{2^m}\cdot {\log_2 A_{m,r}(\omega)}$, we shall confine our attention to only those weights with positive true weight enumerators. 
%Table \ref{tab:wt4} shows comparisons between estimates of the weight enumerators obtained using our sampling-based method, with the true weight enumerators of RM$(4,2)$. 

{Figure \ref{fig:rmwt_n9r4} shows comparisons of the rates of our estimates of the weight enumerators of RM$(9,4)$ with the true rates.} {The true weight enumerators for RM$(9,4)$ are taken directly from \cite{RM94_weights}.} We observe that our estimates are close to the ground truth. 

{We also use our method to estimate the rates of the weight enumerators of RM$(11,5)$, for weights $512\leq\omega\leq 1024$ that are multiples of 4; some of these estimates are tabulated in Table \ref{tab:rmwt115_1} and the remaining in Appendix \ref{sec:app-wtspec}. The computations were carried out on a computer with an Intel i7-7700 core and 16 GB of RAM.} We mention that computing the weight enumerator estimates for $\omega<512$ requires $\tau\geq 10^9$ in order to reach a codeword of the weight $\omega$; one may try to obtain estimates of $A(\omega)$ for such $\omega$, using more powerful computers.

\begin{table}[t!]
	\hfill
	\centering
		\begin{tabular}{||c||c||}
			\hline
			$\omega$ & $\frac{\log_2 \widehat{Z}}{2^m}$ \\ \hline 
			512 & 0.2967884396 \\ \hline
			516 & 0.3044142654 \\ \hline
			520 & 0.3098708781 \\ \hline
			524 & 0.3117907964 \\ \hline
			528 & 0.3159142454 \\ \hline
			532 & 0.3189211625 \\ \hline
			536 & 0.3210634545 \\ \hline
			540 & 0.3207781983 \\ \hline
			544 & 0.3292414942 \\ \hline
			548 & 0.3328099856 \\ \hline
			552 & 0.3325168244 \\ \hline
			556 & 0.33856337 \\ \hline
			560 & 0.3386807389 \\ \hline
			564 & 0.3458428641 \\ \hline
			568 & 0.3445511304 \\ \hline
			572 & 0.3516682886 \\ \hline
			576 & 0.3508534035 \\ \hline
			580 & 0.3532623147 \\ \hline
			584 & 0.3584547222 \\ \hline
			588 & 0.3586827759 \\ \hline
			592 & 0.3643411714 \\ \hline
			596 & 0.3626564667 \\ \hline
			600 & 0.3667665428 \\ \hline
			604 & 0.3719335599 \\ \hline
			608 & 0.369610572 \\ \hline
			612 & 0.3721375515 \\ \hline
			616 & 0.3791433248 \\ \hline
			620 & 0.378965573 \\ \hline
			624 & 0.3822856786 \\ \hline
			628 & 0.3860521584 \\ \hline
			632 & 0.3867559043 \\ \hline
			636 & 0.3904530961 \\ \hline
			640 & 0.3904831381 \\ \hline
			644 & 0.3886726124 \\ \hline
			648 & 0.3947849511 \\ \hline
			652 & 0.3994320755 \\ \hline
			656 & 0.4002439455 \\ \hline
			660 & 0.4041690765 \\ \hline
			664 & 0.405357908 \\ \hline
			
		\end{tabular}
\hfill
		\begin{tabular}{||c||c||}
			\hline
			$\omega$ & $\frac{\log_2 \widehat{Z}}{2^m}$ \\ \hline 
			668 & 0.4046384939 \\ \hline
			672 & 0.4083316113 \\ \hline
			676 & 0.4066164489 \\ \hline
			680 & 0.4122257465 \\ \hline
			684 & 0.416757749 \\ \hline
			688 & 0.4154593686 \\ \hline
			692 & 0.4151373483 \\ \hline
			696 & 0.4193250188 \\ \hline
			700 & 0.4228290541 \\ \hline
			704 & 0.426700815 \\ \hline
			708 & 0.4232395375 \\ \hline
			712 & 0.4260621686 \\ \hline
			716 & 0.4285488693 \\ \hline
			720 & 0.4313321598 \\ \hline
			724 & 0.4340572763 \\ \hline
			728 & 0.4378637509 \\ \hline
			732 & 0.4374155272 \\ \hline
			736 & 0.439222757 \\ \hline
			740 & 0.4383031672 \\ \hline
			744 & 0.4420132493 \\ \hline
			748 & 0.4419631323 \\ \hline
			752 & 0.4488207103 \\ \hline
			756 & 0.44347076 \\ \hline
			760 & 0.4476962448 \\ \hline
			764 & 0.4495450293 \\ \hline
			768 & 0.4489722934 \\ \hline
			772 & 0.4509841781 \\ \hline
			776 & 0.454131732 \\ \hline
			780 & 0.4562743742 \\ \hline
			784 & 0.4541577203 \\ \hline
			788 & 0.45629281 \\ \hline
			792 & 0.4579614556 \\ \hline
			796 & 0.4595784776 \\ \hline
			800 & 0.4604385389 \\ \hline
			804 & 0.4628383539 \\ \hline
			808 & 0.4655936825 \\ \hline
			812 & 0.4666761093 \\ \hline
			816 & 0.4660395012 \\ \hline
			820 & 0.4687878527 \\ \hline
			
		\end{tabular}
	\hfill \hfill
	\caption{Table of rate estimates $\frac{\log_2 \widehat{Z}}{2^m}$ of the weight enumerators for RM$(11,5)$, at weights $512\leq\omega\leq 820$ with non-zero weight enumerators. Here, the parameters $\tau = 10^6$, $t=10$, and $\delta = 0.001$.}
	\label{tab:rmwt115_1}
\end{table}

\begin{figure}[!h]
	\centering
	\includegraphics[width=\linewidth]{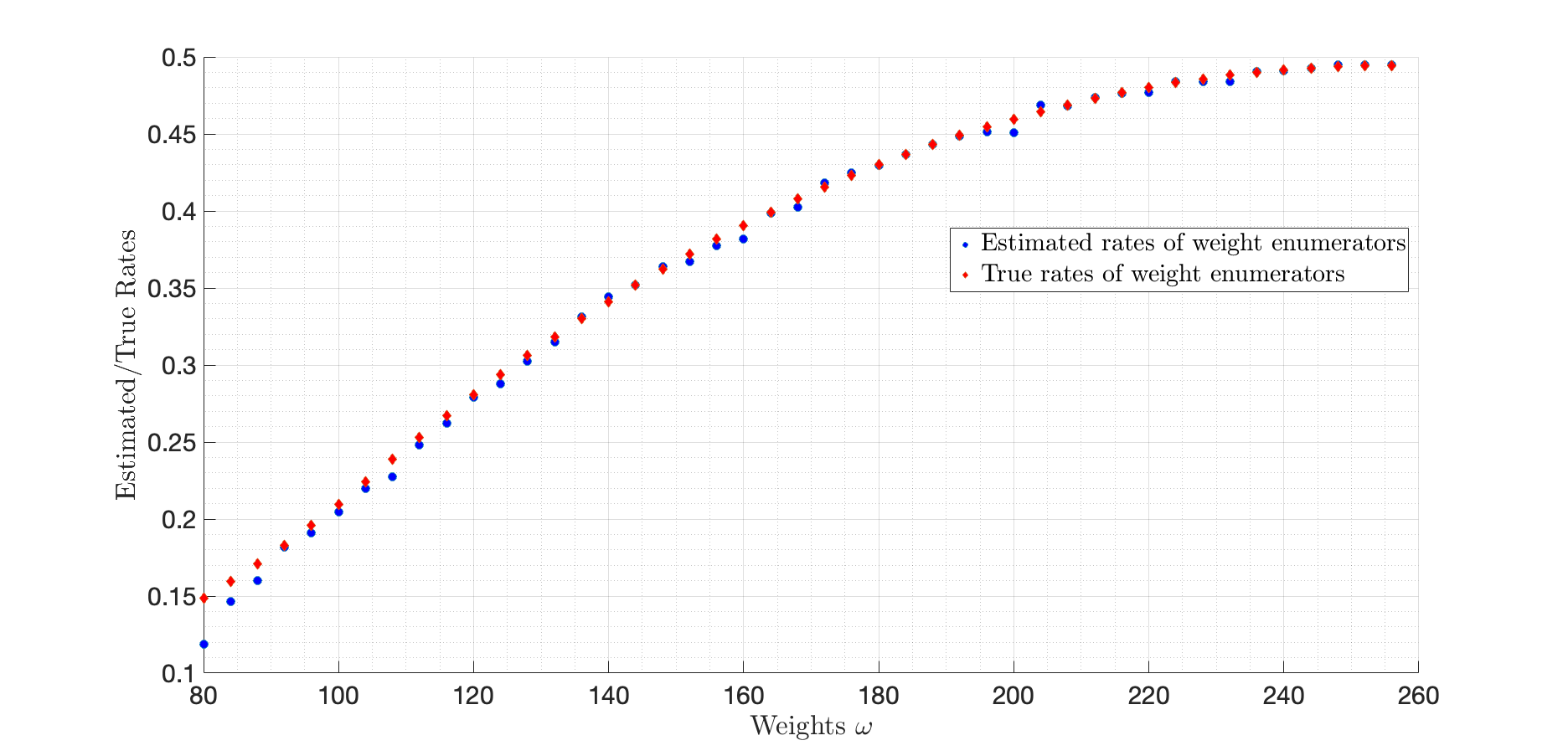}
	\caption{Plot comparing the estimates of rates of the weight enumerators (for selected weights with positive weight enumerators) obtained via our sampling-based approach with the true rates, for RM$(9,4)$, obtained from \cite{RM94_weights}. For these runs, we set $\tau = 5\times10^5$, $t=10$, and $\delta = 0.001$.}
	\label{fig:rmwt_n9r4}
\end{figure}

%\begin{table}[t!]
%	\centering
%	\begin{tabular}{||c || c|| c||} 
%		\hline
%		\hline
%		$76$ & $4.079\times 10^{22}$ & $0.1467$\\
%		\hline
%		$80$ & $2.991\times 10^{23}$ & $0.1523$\\
%		\hline
%		$84$ & $1.429\times 10^{25}$ & $0.1632$\\
%		\hline
%	\end{tabular}
%	\caption{Table of estimates $\widehat{Z}$ and $\frac{\log_2 \widehat{Z}}{2^m}$ of the weight enumerators and their rates, respectively, for RM$(9,4)$, for weights $\omega = 76, 80, 84$. Here, the parameters $\tau = 5\times 10^5$, $t=1$, and $\delta = 0.001$.}
%	\label{tab:rmwt94}
%\end{table}

\subsection{Exact Weight Spectrum Computations}
Let us denote by $S_{m,r}$ the weight spectrum of RM$(m,r)$. From \cite[Cor. 13, Chap. 15, p. 447]{mws}, we have that $S_{m,r}\subseteq \{\omega: \omega \text{ is a multiple of } 2^{\left \lceil m/r\right \rceil - 1}\}$. {Furthermore, the exact weight enumerators are known for all weights $\omega < 2.5\cdot 2^{m-r}$; let $S_{m,r}^{<}$ denote the set of such weights with positive weight enumerators. It thus suffices to run Algorithm \ref{alg:check} for only those weights in the candidate set $\tilde{S}^{>}_{m,r}:=\{\omega: 2.5\cdot 2^{m-r}\leq\omega\leq 2^{m-1} \text{ and } \omega \text{ is a multiple of } 2^{\left \lceil m/r\right \rceil - 1}\}$, by the symmetry of the weight distribution.} We ran our algorithm for RM$(10,3)$ and RM$(10,4)$, for which the true weight spectra are still unknown. For each weight $\omega \in \tilde{S}_{m,r}$, we record the codeword, if any, found at that weight by Algorithm \ref{alg:check}. The message vectors $\mathbf{u}$ of length equal to the dimension of the code, which generate codewords of selected weights are presented in Appendix \ref{sec:codewords}; the complete list of codewords found at weights in $\tilde{S}_{m,r}$ can be found at \cite{sjarnkgit}. We observe that all weights in $\tilde{S}^>_{m,r}$ have positive weight enumerators, for $m=10$ and $r=3,4$, thereby giving rise to the following theorem.
\begin{theorem}
	We have that $S_{10,3} = S^<_{10,3}\cup \tilde{S}^>_{10,3}$ and $S_{10,4} =S^<_{10,4}\cup \tilde{S}^>_{10,4}$.
\end{theorem}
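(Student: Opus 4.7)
The plan is to prove the two inclusions $S_{m,r} \subseteq S^<_{m,r}\cup \tilde{S}^>_{m,r}$ and $S^<_{m,r}\cup \tilde{S}^>_{m,r} \subseteq S_{m,r}$ separately, for each of $(m,r) = (10,3)$ and $(10,4)$. The first inclusion is essentially a bookkeeping exercise using known structural results about RM codes, while the second inclusion splits into a trivial half and a constructive half that is where the computation actually enters.

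For the inclusion $S_{m,r} \subseteq S^<_{m,r}\cup \tilde{S}^>_{m,r}$, I would combine three facts already invoked in the excerpt. First, by \cite[Cor.~13, Chap.~15]{mws}, every weight in $S_{m,r}$ is a multiple of $2^{\lceil m/r\rceil -1}$. Second, the weight distribution is symmetric about $2^{m-1}$, so it suffices to consider $\omega \leq 2^{m-1}$. Third, the Kasami--Tokura classification \cite{kasamitokura,kasami2.5d} determines exactly which $\omega < 2.5\cdot 2^{m-r}$ support codewords, and this is, by definition, $S^<_{m,r}$. Any remaining weight in $[2.5\cdot 2^{m-r},\,2^{m-1}]$ that could possibly occur must, by the divisibility condition, lie in $\tilde{S}^>_{m,r}$, giving the claimed containment.

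For the reverse inclusion, $S^<_{m,r} \subseteq S_{m,r}$ is automatic from the definition of $S^<_{m,r}$. The substantive content is the containment $\tilde{S}^>_{m,r} \subseteq S_{m,r}$, i.e., that every candidate weight is actually attained. I would prove this constructively: for each $\omega \in \tilde{S}^>_{m,r}$, exhibit an explicit codeword $\mathbf{c} \in \text{RM}(m,r)$ with $w_H(\mathbf{c}) = \omega$, recorded via a message vector $\mathbf{u}$ that maps to $\mathbf{c}$ under a fixed generator matrix of RM$(m,r)$. Crucially, although these codewords are \emph{discovered} by running Algorithm \ref{alg:check} (which uses the Metropolis sampler at a large $\beta^\star$), the correctness of the claim does not depend on any mixing-time or sampling guarantee: once a putative witness is output, its Hamming weight is recomputed deterministically and compared against $\omega$. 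Selected witnesses are tabulated in Appendix \ref{sec:codewords}; the complete list is available at \cite{sjarnkgit}.

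The main obstacle is operational rather than mathematical. For weights near the low end of $\tilde{S}^>_{m,r}$, the set $\mathcal{C}^{(\omega)}$ can be exponentially small compared to $\mathcal{C}$, and the sampler may need a substantially larger number of Metropolis steps $\tau$ and/or a larger $\beta^\star$ before the chain first lands on a codeword of weight exactly $\omega$. In practice I would handle this by increasing $\tau$ and $\beta^\star$ adaptively, and by warm-starting the chain from codewords of nearby weights that have already been certified, until a witness exists for every $\omega \in \tilde{S}^>_{m,r}$. Once such a witness table is in hand, the verification is an entirely deterministic check that the Hamming weight of each listed codeword equals the corresponding candidate weight, completing the proof.
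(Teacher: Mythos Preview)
Your proposal is correct and mirrors the paper's own argument: the containment $S_{m,r}\subseteq S^{<}_{m,r}\cup\tilde S^{>}_{m,r}$ is handled by the divisibility result, symmetry about $2^{m-1}$, and the Kasami--Tokura classification below $2.5d$, while the reverse containment is established computationally by exhibiting, for every $\omega\in\tilde S^{>}_{m,r}$, an explicit witness codeword (found via the sampler but certified by a deterministic weight check), with the full witness list deposited at \cite{sjarnkgit}. Your emphasis that the proof's validity is independent of any mixing-time guarantee---because each witness is verified directly---is exactly the point the paper relies on.
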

%\section{Efficiently Computing the Weight Distribution of RM Codes}
%\input{wtdist.tex}
\section{Conclusion}
In this paper, we proposed a novel sampling-based approach for computing estimates of the weight enumerators and the weight spectra of the Reed-Muller (RM) family of codes. We observed that our estimates are close to the true sizes (or rates) for those RM codes where a direct computation of the true values is computationally feasible. Moreover, using our techniques, we obtained estimates of the weight enumerators of RM$(11,5)$, whose true weight enumerators are not known for all weights, and the exact, hitherto unknown weight spectra of RM$(10,3)$ and RM$(10,4)$. We also provided theoretical guarantees of the robustness of our estimates and argued that for a fixed error in approximation, our proposed algorithm uses a number of samples that is only polynomial in the blocklength of the code. Such sampling-based approaches have been largely unexplored for counting problems of interest in coding theory and we believe that there is much scope for the application of such techniques to other open problems.

% if have a single appendix:
%\appendix[Proof of the Zonklar Equations]
% or
%\appendix  % for no appendix heading
% do not use \section anymore after \appendix, only \section*
% is possibly needed

% use appendices with more than one appendix
% then use \section to start each appendix
% you must declare a \section before using any
% \subsection or using \label (\appendices by itself
% starts a section numbered zero.)
%

%\appendices
%\section{Proof of the First Zonklar Equation}
%Appendix one text goes here.
\appendices
\section{Theoretical Guarantees}

\label{sec:theory}
To demonstrate how good our estimate $\widehat{Z}_{\beta^\star}$ of the weight enumerator $A(\omega)$ is, for a fixed $0\leq \omega\leq n$, we invoke the following (well-known) theorem of Dyer and Frieze \cite{dyerfrieze} (see also Theorem 2.1 in \cite{stefankovic}):
\begin{theorem}
	\label{thm:dyer}
	Fix an $\epsilon \geq 0$. Let $U_1,\ldots,U_\ell$ be independent random variables with $\mathbb{E}[U_i^2]/(\mathbb{E}[U_i])^2 \leq B$, for some $B\geq 0$ and for $1\leq i\leq \ell$. Set $\widehat{U} = \prod_{i=1}^\ell U_i$. Also, for $1\leq i\leq \ell$, let $V_i$ be the average of $16B\ell/\epsilon^2$ independent random samples having the same distribution as $U_i$; set $\widehat{V} =  \prod_{i=1}^\ell V_i$. Then,
	\[
	\Pr\left[(1-\epsilon)\mathbb{E}[\widehat{U}]\leq \widehat{V}\leq (1+\epsilon)\mathbb{E}[\widehat{U}]\right]\geq \frac34.
	\]
\end{theorem}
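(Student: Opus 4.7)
The plan is to prove the theorem via a second-moment argument for $\widehat{V}$ combined with Chebyshev's inequality, with the sample size $t := 16B\ell/\epsilon^2$ calibrated precisely so that the relative variance of the product $\widehat{V}$ comes out of order $\epsilon^2$. First I would observe that $\widehat{V}$ is unbiased for $\mathbb{E}[\widehat{U}]$: because the samples used to form different $V_i$'s are independent, one immediately gets $\mathbb{E}[\widehat{V}] = \prod_i \mathbb{E}[V_i] = \prod_i \mathbb{E}[U_i] = \mathbb{E}[\widehat{U}]$, so it suffices to control how $\widehat{V}$ deviates from its own mean.

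Next I would bound the second-moment-to-mean-squared ratio of each factor $V_i$. Since $V_i$ is the average of $t$ i.i.d. copies of $U_i$, we have $\mathrm{Var}(V_i) = \mathrm{Var}(U_i)/t$, and therefore
$$\frac{\mathbb{E}[V_i^2]}{(\mathbb{E}[V_i])^2} \;=\; 1 + \frac{\mathrm{Var}(U_i)}{t\,(\mathbb{E}[U_i])^2} \;\leq\; 1 + \frac{B}{t}.$$
By independence of the $V_i$'s across $i$, these ratios multiply, yielding
$$\frac{\mathbb{E}[\widehat{V}^2]}{(\mathbb{E}[\widehat{V}])^2} \;\leq\; \left(1 + \frac{B}{t}\right)^{\!\ell} \;\leq\; \exp\!\left(\frac{B\ell}{t}\right) \;=\; \exp(\epsilon^2/16),$$
so that, subtracting $1$,
$$\frac{\mathrm{Var}(\widehat{V})}{(\mathbb{E}[\widehat{V}])^2} \;\leq\; \exp(\epsilon^2/16) - 1 \;\leq\; \frac{\epsilon^2}{8},$$
where the last step uses the elementary inequality $e^x - 1 \leq 2x$ on $[0,1]$, valid whenever $\epsilon \leq 4$. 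For $\epsilon > 4$, the desired conclusion is weaker and follows from monotonicity (e.g., by invoking the already-proved case $\epsilon = 4$).

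The argument then closes in one line by Chebyshev's inequality applied to $\widehat{V}$:
$$\Pr\!\left[\,|\widehat{V} - \mathbb{E}[\widehat{V}]| \,\geq\, \epsilon\,\mathbb{E}[\widehat{V}]\,\right] \;\leq\; \frac{1}{\epsilon^2}\cdot \frac{\mathrm{Var}(\widehat{V})}{(\mathbb{E}[\widehat{V}])^2} \;\leq\; \frac{1}{8} \;\leq\; \frac{1}{4},$$
and combining this with the unbiasedness $\mathbb{E}[\widehat{V}] = \mathbb{E}[\widehat{U}]$ yields the claim. The only genuinely nontrivial step is the choice of the sample size $t = 16 B \ell/\epsilon^2$: it must be chosen so that, after inflating the per-coordinate relative variance $B/t$ by the factor $\ell$ through $(1+x)^\ell \leq e^{x\ell}$, the accumulated relative variance of the $\ell$-fold product is still controlled by $\epsilon^2$. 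Everything else is routine bookkeeping with moment calculations and standard inequalities.
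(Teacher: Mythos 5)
The paper does not actually prove this theorem: it is imported verbatim from Dyer and Frieze (see also Theorem 2.1 of \cite{stefankovic}) and used as a black box, so there is no in-paper argument to compare against. Your reconstruction is the standard second-moment proof of that result, and its core is correct: unbiasedness of $\widehat{V}$ via independence, the per-factor bound $\mathbb{E}[V_i^2]/(\mathbb{E}[V_i])^2 \le 1 + \mathrm{Var}(U_i)/(t(\mathbb{E}[U_i])^2) \le 1+B/t$, multiplicativity of these ratios across independent factors, the estimate $(1+B/t)^\ell \le e^{B\ell/t} = e^{\epsilon^2/16}$, the elementary bound $e^x-1\le 2x$ on $[0,1]$ (valid by concavity of $2x-e^x+1$), and Chebyshev. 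This yields failure probability at most $1/8$, which is even stronger than the claimed $1/4$. (Implicitly you need $\mathbb{E}[\widehat{V}]>0$ to apply Chebyshev with deviation $\epsilon\,\mathbb{E}[\widehat{V}]$; this holds in the paper's application, where the $U_i$ are the positive random variables $X_i\in[e^{-1},1]$.)

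The one step that does not hold up is your dispatch of the case $\epsilon>4$ ``by monotonicity.'' The sample size $t=16B\ell/\epsilon^2$ \emph{decreases} as $\epsilon$ grows, so the estimator $\widehat{V}$ appearing in the statement for some $\epsilon>4$ is built from fewer samples than the one in the already-proved case $\epsilon=4$; you cannot infer the former event's probability from the latter's, since they concern different random variables. Nor does the direct bound rescue you there: the relative variance is only controlled by $e^{\epsilon^2/16}-1$, and $(e^{\epsilon^2/16}-1)/\epsilon^2$ exceeds $1/4$ once $\epsilon\gtrsim 6$, so Chebyshev alone fails. This is an edge case of no practical consequence---the theorem is only ever invoked with small $\epsilon$, and the standard statements in the literature restrict to $\epsilon\le 1$ or similar---but as written the reduction is invalid, and the clean fix is simply to state and prove the theorem for $\epsilon\in(0,1]$ (or $\epsilon\le 4$), which is all the paper needs.
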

As a direct corollary, we obtain the following guarantee about our estimate $\widehat{Z}_{\beta^\star}$. Let $t^\star = 16e^2\ell/\epsilon^2$.
\begin{corollary}
	\label{cor:dyer}
	Fix an $\epsilon \geq 0$. If $Y_i$ is the average of $t^\star$ i.i.d. samples having the same distribution as $X_i$ (see (7) in the paper), then,
	\[
	\Pr\left[(1-\epsilon)Z_{\beta^\star}\leq \widehat{Z}_{\beta^\star}\leq (1+\epsilon)Z_{\beta^\star}\right]\geq \frac34.
	\]
\end{corollary}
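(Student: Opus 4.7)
The plan is to invoke Theorem~\ref{thm:dyer} with the choice $U_i = X_i$, where $X_i = \exp(-E(\mathbf{c})/n)$ for $\mathbf{c}\sim p_{\beta_{i-1}}$ (recall that $\beta_i - \beta_{i-1} = 1/n$). With this identification, $V_i$ in the theorem corresponds exactly to the sample mean $Y_i$ in Algorithm~\ref{alg:estimate}, and $\widehat{V} = \prod_{i=1}^\ell Y_i$. By \eqref{eq:expectation}, $\mathbb{E}[X_i] = Z_{\beta_i}/Z_{\beta_{i-1}}$, so the telescoping product \eqref{eq:zbstar} gives
\[
\mathbb{E}[\widehat{U}] = \prod_{i=1}^\ell \mathbb{E}[X_i] = \frac{Z_{\beta^\star}}{Z_{\beta_0}}.
\]
The conclusion of Theorem~\ref{thm:dyer}, multiplied through by $Z_{\beta_0} = |\mathcal{C}|$, then yields exactly the claimed concentration of $\widehat{Z}_{\beta^\star} = Z_{\beta_0}\cdot \widehat{V}$ around $Z_{\beta^\star}$.

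The only nontrivial step is to verify that the hypothesis of Theorem~\ref{thm:dyer} holds with the constant $B = e^2$, so that the required number of samples per ratio is $t^\star = 16 B \ell/\epsilon^2 = 16 e^2 \ell/\epsilon^2$. For this, I would use the fact that the energy $E(\mathbf{c}) = |w_H(\mathbf{c}) - \omega|$ is bounded between $0$ and $n$ for every $\mathbf{c}\in \{0,1\}^n$. Consequently $X_i = \exp(-E(\mathbf{c})/n) \in [e^{-1}, 1]$ almost surely, which immediately gives $\mathbb{E}[X_i^2] \le 1$ and $\mathbb{E}[X_i] \ge e^{-1}$, and hence
\[
\frac{\mathbb{E}[X_i^2]}{(\mathbb{E}[X_i])^2} \le \frac{1}{e^{-2}} = e^2.
\]

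The independence hypothesis of Theorem~\ref{thm:dyer} is satisfied by construction: in Algorithm~\ref{alg:estimate}, the samples $\{\mathbf{c}_{i,j}\}$ are drawn independently across both $i$ and $j$, so the $Y_i$'s are mutually independent, and each $Y_i$ is an average of $t^\star$ i.i.d.\ copies of $X_i$. Putting these pieces together and invoking Theorem~\ref{thm:dyer} completes the argument.

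The main obstacle is really just the choice of the uniform bound $B$; everything else is bookkeeping on the telescoping product. The bound $B = e^2$ is the tightest one can obtain from pointwise bounds on $X_i$ alone, and it is what produces the explicit constant in $t^\star$. A subtle point worth flagging is that this argument assumes exact sampling from $p_{\beta_{i-1}}$, whereas Algorithm~\ref{alg:sample} only produces approximate samples after $\tau$ Metropolis steps; the bound above is therefore the ``idealised'' guarantee, and a fully rigorous statement would need to absorb the total-variation error of the sampler into $\epsilon$, which the paper does not attempt to quantify.
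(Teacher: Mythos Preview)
Your proof is correct and follows essentially the same route as the paper: bound $X_i\in[e^{-1},1]$ via $0\le E(\mathbf{c})\le n$ and $\beta_i-\beta_{i-1}=1/n$ to get $B=e^2$, then apply Theorem~\ref{thm:dyer} together with the telescoping identity~\eqref{eq:zbstar} and independence of the $Y_i$'s. Your additional remark about the approximate nature of the Metropolis sampler is apt and matches what the paper leaves unquantified.
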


\begin{proof}
	Since we have that $0\leq E(\mathbf{c})\leq n$ for all $\omega, n$, it follows that the random variable $X_i \in [e^{-1},1]$ as $\beta_i-\beta_{i-1} = \frac{1}{n}$. Hence, $\mathbb{E}[X_i^2]/(\mathbb{E}[X_i])^2 \leq e^2=:B$. The proof then follows from a simple application of Theorem \ref{thm:dyer} with the observation that by the independence of the $Y_i$s, we have $\mathbb{E}[Z_{\beta_0}\times \prod_{i=1}^\ell Y_i] = Z_{\beta^\star}$, from Eq. (7) in the paper.
\end{proof}
\begin{remark}
	Following Proposition 4.2 in \cite[Lecture 4]{sinclair}, we have that the constant on the right-hand side of Corollary \ref{cor:dyer} can be improved to $1-\gamma$ for $\gamma$ arbitrarily small, by using the new estimate $\overline{Z}_{\beta^\star}$ that is the median of $\widehat{Z}_{\beta^\star}^{(1)},\ldots,\widehat{Z}_{\beta^\star}^{(T)}$ where $T = O(\log \gamma^{-1})$ and each $\widehat{Z}_{\beta^\star}^{(i)}$, for $1\leq i\leq T$, is drawn i.i.d. according to (8) in the paper. Hence, we obtain an estimate that lies in $[(1-\epsilon)Z_{\beta^\star},(1+\epsilon)Z_{\beta^\star}]$, for $\epsilon$ arbitrarily small, with arbitrarily high probability.
\end{remark}
Observe that the number of samples $t^\star$ required to compute a single sample average as in (7) in the paper is polynomial (in fact, linear) in the length $\ell$ of the cooling schedule, for a fixed $\epsilon>0$. It thus remains to specify this length $\ell$. From arguments similar to that in \cite[Lecture 4]{sinclair} (see the paragraph following Eq. (4.6) there), we have that for $\beta^\star = O(n^2)$, the value $Z_{\beta^\star}$ is such that $Z_{\beta^\star} = (1+\delta_n)Z = (1+\delta_n)A(\omega)$, for $\delta_n = \text{exp}(-\Theta(n^2))$ and for any fixed weight $\omega$. In other words, for large $n$, with $\beta^\star = \Theta(n^2)$, our estimate $\widehat{Z}_{\beta^\star}$ is such that
\[
\Pr[(1-\epsilon)(1+\delta_n)A(\omega)\leq \widehat{Z}_{\beta^\star}\leq (1+\epsilon)(1+\delta_n)A(\omega)]\geq \frac34,
\]
from Corollary \ref{cor:dyer}. Hence, it suffices for $\ell$ to be $\Theta(n^3)$ (since $\beta_i - \beta_{i-1} = 1/n$, for $1\leq i\leq \ell$, with $\beta_0 = 0$ and $\beta_\ell = \beta^\star$) to obtain a good estimate of the true weight enumerator $A(\omega)$. We also then have that the total number of samples required, $t^\star\ell$, is $\Theta(n^6)$, for a fixed $\epsilon$, which is still only polynomial in the blocklength $n$ of $\mathcal{C}$, independent of its dimension. 

This must be contrasted with the number of computations required for brute-force search, which is at least $\min(2^k,2^{n-k})$ \cite{macwilliams}, that is exponential in $n$ when $k$ grows linearly in $n$. Furthermore, for the special case when $\mathcal{C}$ is a self-dual RM code of the form RM$(m,\frac{m-1}{2})$, for some $m\geq 1$ odd, we see that the time complexity of the algorithm in \cite{sarwate} is at least as much as the number of cosets of RM$(m-1,\frac{m-3}{2})$ in RM$(m-1,\frac{m-1}{2})$ (assuming that the weight enumerators of these cosets were all known in advance). This, in turn, equals $2^{\binom{m-1}{\frac{m-1}{2}}} = \text{exp}\left({\Theta\left(\frac{2^{m}}{\sqrt{m-1}}\right)}\right)$ (see Section 5.4 in \cite{asymp}), which is almost exponential in the blocklength $n=2^m$. Our sampling-based approach hence provides great savings in complexity, at the cost of some error in accuracy.
\section{Tables of Rate Estimates of Weight Enumerators of RM$(11,5)$}
\label{sec:app-wtspec}
We record our estimates of selected weight enumerators of RM$(11,5)$ in Table \ref{tab:rmwt115_2}.

\begin{table}[h!]
	\centering
	\hfill
	\begin{tabular}{||c||c||}
		\hline
		$\omega$ & $\frac{\log_2 \widehat{Z}}{2^m}$ \\ \hline 
		824 & 0.4677889949 \\ \hline
		828 & 0.4692737546 \\ \hline
		832 & 0.470439285 \\ \hline
		836 & 0.4720260806 \\ \hline
		840 & 0.4757967321 \\ \hline
		844 & 0.4728329498 \\ \hline
		848 & 0.4764918242 \\ \hline
		852 & 0.4758886476 \\ \hline
		856 & 0.4780183181 \\ \hline
		860 & 0.4790559436 \\ \hline
		864 & 0.4783843133 \\ \hline
		868 & 0.4801069772 \\ \hline
		872 & 0.4827332212 \\ \hline
		876 & 0.4821388907 \\ \hline
		880 & 0.4821725246 \\ \hline
		884 & 0.4830228041 \\ \hline
		888 & 0.4833586871 \\ \hline
		892 & 0.4838592857 \\ \hline
		896 & 0.4842506892 \\ \hline
		900 & 0.4869631419 \\ \hline
		904 & 0.4878321476 \\ \hline
		908 & 0.4893220263 \\ \hline
		912 & 0.4883336432 \\ \hline
		916 & 0.4889590025 \\ \hline
		920 & 0.4889086859 \\ \hline
		924 & 0.4898091244 \\ \hline

	\end{tabular}
	\hfill
	\begin{tabular}{||c||c||}
		\hline
		$\omega$ & $\frac{\log_2 \widehat{Z}}{2^m}$ \\ \hline
		%  980 & 984 & 988 & 992 & 996 & 1000 & 1004 & 1008 & 1012 & 1016 & 1020 & 1024\\ \hline
		%  $\frac{\log_2 \widehat{Z}}{2^m}$ & 0.4966750273 & 0.4954571397 & 0.4963111889 & 4968925105 & 0.4972624121 & 0.496628302 & 0.4974914362 & 0.4975938289 & 0.4975072018 & 0.4966127403 & 0.4978946567 & 0.4980060621 \\ \hline 
		
		928 & 0.491859122 \\ \hline
		932 & 0.4903585543 \\ \hline
		936 & 0.4922892439 \\ \hline
		940 & 0.4914607835 \\ \hline
		944 & 0.4934381041 \\ \hline
		948 & 0.4931433172 \\ \hline
		952 & 0.4945423179 \\ \hline
		956 & 0.4929904748 \\ \hline
		960 & 0.4943895946 \\ \hline
		964 & 0.4952178539 \\ \hline
		968 & 0.4945798255 \\ \hline
		972 & 0.495527873 \\ \hline
		976 & 0.4970328654 \\ \hline
		980 & 0.4966750273 \\ \hline
		984 & 0.4954571397 \\ \hline
		988 & 0.4963111889 \\ \hline
		992 & 0.4968925105 \\ \hline
		996 & 0.4972624121 \\ \hline
		1000 & 0.496628302 \\ \hline
		1004 & 0.4974914362 \\ \hline
		1008 & 0.4975938289 \\ \hline
		1012 & 0.4975072018 \\ \hline
		1016 & 0.4966127403 \\ \hline
		1020 & 0.4978946567 \\ \hline
		1024 & 0.4980060621 \\ \hline
	\end{tabular}
	\hfill \hfill
	\caption{Table of rate estimates $\frac{\log_2 \widehat{Z}}{2^m}$ of the weight enumerators for RM$(11,5)$, at weights $821\leq\omega\leq 1024$ with non-zero weight enumerators. Here, the parameters $\tau = 10^6$, $t=10$, and $\delta = 0.001$.}
	\label{tab:rmwt115_2}
\end{table}

\section{Tables of Selected Message Vectors}
\label{sec:codewords}
\begin{table}[t!]
	\begin{tabular}{||p{0.07\textwidth} || p{0.35\textwidth}||}
		\hline
		Weight $\omega$ & Support of message vector \\ \hline
		328 & \{1, 9, 
		11, 14, 15, 18, 20, 23, 
		24, 25, 27, 31, 34, 35, 36, 43, 47, 48, 49, 50, 54, 59, 61
		, 62, 65, 67, 69, 
		71, 72, 73, 79, 82, 83, 84, 86, 87, 91, 94, 95, 98, 99, 100, 102, 107, 108, 110, 111, 112, 113, 116, 117, 118, 120, 121, 122, 123, 124, 126, 
		129, 130, 131, 132, 135, 136, 137, 138, 140, 141, 143, 144, 145, 146, 147, 148, 149, 150, 151, 153, 154, 155, 159, 160, 163, 164
		, 166, 172, 173, 176\}\\ \hline
		480 & \{7, 12, 13, 14, 15, 17, 19, 20, 22, 27, 28, 29, 30, 31, 32, 35, 36, 38, 39, 41, 42, 48, 50, 54, 55, 56
		, 57, 60, 67, 68, 70, 72, 73, 75, 76, 80, 82, 83, 85, 86, 88, 91, 93, 95, 96, 97, 98, 99, 100, 102, 103, 104, 107, 109, 111, 116, 118, 119, 121, 123, 124, 126, 128, 129, 130, 131, 133, 
		139, 142, 143, 144, 148, 150, 151, 156, 162, 164, 165, 166, 167, 168, 169, 170, 171, 173, 174, 175, 176\} \\ \hline
		512 & \{1, 3, 6, 8, 12, 14, 15, 16, 18, 19, 20, 25, 27, 30, 32, 35
		, 36, 37, 38, 40, 41, 42, 43, 44, 47, 48, 50, 51, 52, 54, 55, 56, 59, 62, 63, 66, 67, 71, 72, 77, 78, 79, 82, 85, 87, 88, 89, 91, 92, 96, 99, 100, 101, 102, 103, 104, 105, 106, 108, 110, 113, 114, 115, 116, 120, 121, 123, 125, 126, 127, 128, 129, 130, 132, 133,
		135, 136, 140, 141, 143, 147, 149, 150, 151, 152, 154, 155, 161, 164, 165, 167, 168, 169, 170, 171, 172, 173, 175, 176\} \\ \hline
	\end{tabular}
	\caption{Table of supports of message vectors corresponding to codewords of RM$(10,3)$ of selected weights}
	\label{tab:rmsp10_3}
\end{table}
In this section, we list the ``message vectors'' $\mathbf{u}$ of length equal to the dimension of the RM code under consideration, which give rise to the codewords $\mathbf{c} = \mathbf{u}G$ of selected weights obtained via a sampling-based search strategy; here $G$ is a chosen generator matrix of the RM code. We next specify the construction of the generator matrix $G = G_{m,r}$ of the RM$(m,r)$. For $m\geq 1$ and $0\leq r\leq m$, we set $G_{m,r}$ equal to the $2^m$-length all-ones vector $1^{2^m}$, if $r=0$, and $G_{m,m} = I_{2^m}$, where $I_n$ denotes the $n\times n$ identity matrix. For $1\leq r\leq m-1$, $G_{m,r}$ is constructed recursively as follows:
\[
G_{m,r} = 
\begin{pmatrix}
	G_{m-1,r} & G_{m-1,r}\\
	\text{\large\underline{0}} &  G_{m-1,r-1}
\end{pmatrix}.
\]
In the above construction, $\text{\large \underline{0}} $ represents the all-zero matrix of order $\binom{m-1}{\leq r-1}\times 2^{m-1}$.

\begin{table}[t!]
	\begin{tabular}{||p{0.07\textwidth} || p{0.35\textwidth}||}
		\hline
		Weight $\omega$ & Support of message vector \\ \hline
		
		164 & \{2, 3, 8, 19, 22, 24, 27, 29, 30, 32, 33, 34, 39, 40, 41, 42
		, 44, 45, 47, 49, 51, 52, 53, 54, 59, 61, 65, 69, 76, 77, 78, 79, 81, 84, 85, 88, 90, 91, 93, 94, 95, 99, 101, 104, 107, 109, 112, 113, 117, 119, 120, 123, 124, 126, 127, 131, 133, 134, 135, 136,
		137, 141, 143, 144, 145, 146, 152, 155, 159, 160,
		161, 165, 171, 172, 176, 178, 184, 186, 187, 188, 190, 191, 195, 197, 199, 201, 205, 207, 209, 211, 213, 214, 216, 217, 219, 222, 223, 224, 226, 230, 232, 
		233, 235, 236, 240, 241, 242, 243, 244, 247, 248, 252, 253, 254, 255, 259, 261, 264, 267, 269, 273, 275, 276, 279, 281, 282, 283, 285, 286, 288, 289, 292, 293, 294, 295, 296, 302, 303, 305, 307, 308, 309, 311, 312, 314, 315, 317, 
		319, 327, 333, 336, 338, 341, 345, 346, 347, 349, 350, 351, 355, 361, 362, 365, 367, 372, 374, 379, 382, 383, 384, 385, 386\}\\ \hline
		216 & \{7, 9, 10, 19, 23, 25, 26, 27, 29, 31, 32, 38, 42, 47, 49, 57, 58, 59, 64, 69, 72, 73, 74,
		77, 82, 84, 85, 88, 92, 93, 96, 98, 99, 106, 109, 110, 113, 117, 127, 130, 134, 135, 136, 138, 139, 143, 147, 148, 150, 154, 155, 158, 162, 165, 166, 172, 174, 176, 179, 182, 183, 185, 187, 190, 192,
		194, 202, 204, 205, 208, 209, 212, 213, 221, 223, 226, 227, 229, 231, 233, 237, 238, 239, 240, 242, 243, 245, 248, 249, 258, 259, 263, 265
		, 267, 269, 271, 275, 276, 278, 280, 282, 283, 285, 287, 289, 294, 295, 296, 297, 298, 299, 303, 307, 312, 313, 315, 316, 317, 318, 319, 320, 322, 323, 326, 327, 333, 335, 336, 337, 338, 341, 342, 348, 349, 351, 352, 355, 356, 358, 359, 360, 361, 367, 368, 369, 371, 373, 377, 378, 380, 382, 383, 386\} \\ \hline
		512 & \{4, 5, 7, 8, 9, 16, 22, 23, 25, 28, 29, 31, 32, 36, 37, 38, 39, 40, 41, 42, 43, 48, 50, 51, 52, 54, 55, 56, 57, 59, 61, 64, 66, 67, 68, 70, 72, 73, 74, 81, 82,
		83, 90, 93, 94, 103, 104, 105, 107, 109, 110, 112, 113, 114, 118, 121, 124, 125, 128, 130, 131, 133, 134, 135, 136, 137, 138,
		141, 143, 144, 145, 146, 152, 153, 155, 157, 161, 163, 165, 166, 168, 169, 170, 172, 175, 177, 178, 179, 180, 182, 183, 184, 186, 187,
		188, 189, 191, 193, 200, 201, 202, 203, 204, 206, 207, 208, 209, 211, 213, 214, 216, 217, 218, 219, 224, 226, 230, 231, 235, 237, 243, 245, 247, 248, 254, 255, 256, 259, 264, 266, 267, 272, 273, 282, 
		283, 284, 285, 286, 290, 292, 293, 295, 297, 298, 300, 303, 306, 308, 312, 313, 316, 317, 318, 319, 320, 321, 323, 324, 325, 329, 
		331, 333, 334, 337, 341, 343, 346, 347, 349, 350, 351, 352, 353, 355, 357, 358, 359, 360, 364, 367, 368, 370, 371, 372, 373, 374, 375, 376, 378, 381, 382, 383, 384, 385, 386\} \\ \hline
	\end{tabular}
	\caption{Table of supports of message vectors corresponding to codewords of RM$(10,4)$ of selected weights}
	\label{tab:rmsp10_4}
\end{table}

Now, given the matrix $G_{m,r}$ as above, we index its columns using length-$m$ binary vectors of the form $(b_1,\ldots,b_m)\in \{0,1\}^m$ in the lexicographic order. In other words, we label the $i^\text{th}$ column, for $1\leq i\leq 2^m$, using the length-$m$ binary representation of $i-1$. Consider then the following set of columns:
\[
\mathcal{I}_{m,r}:=\{\mathbf{b}=(b_1,\ldots,b_m)\in \{0,1\}^m:\ w_H(\mathbf{b})\leq r\}.
\]

From \cite[Lemma 20]{arnk23tit}, we see that the (square) matrix $\overline{G}_{m,r}$ that consists of columns of $G_{m,r}$ in $\mathcal{I}_{m,r}$ is full rank. Hence, given a codeword $\mathbf{c}$ of a selected weight, its corresponding message vector $\mathbf{u}$ is obtained as
\[
\mathbf{u} = \mathbf{c}(\mathcal{I}_{m,r})\cdot(\overline{G}_{m,r})^{-1}.
\]
In the above equation, $\mathbf{c}(\mathcal{I}_{m,r})$ denotes those symbols in $\mathbf{c}$ at locations indexed by the vectors in $\mathcal{I}_{m,r}$.

In Tables \ref{tab:rmsp10_3} and \ref{tab:rmsp10_4}, due to space constraints, we specify the supports of $\mathbf{u}\in \{0,1\}^{k}$, where $k = \dim (\text{RM}(m,r))$, corresponding to codewords of RM$(10,3)$ and RM$(10,4)$, respectively, of selected weights.

% use section* for acknowledgement
%\section*{Acknowledgment}
%The authors would like to thank...

% Can use something like this to put references on a page
% by themselves when using endfloat and the captionsoff option.
\ifCLASSOPTIONcaptionsoff
  \newpage
\fi

% trigger a \newpage just before the given reference
% number - used to balance the columns on the last page
% adjust value as needed - may need to be readjusted if
% the document is modified later
%\IEEEtriggeratref{8}
% The "triggered" command can be changed if desired:
%\IEEEtriggercmd{\enlargethispage{-5in}}

% references section

% can use a bibliography generated by BibTeX as a .bbl file
% BibTeX documentation can be easily obtained at:
% http://www.ctan.org/tex-archive/biblio/bibtex/contrib/doc/
% The IEEEtran BibTeX style support page is at:
% http://www.michaelshell.org/tex/ieeetran/bibtex/
%\bibliographystyle{IEEEtranTCOM}
% argument is your BibTeX string definitions and bibliography database(s)
%\bibliography{IEEEabrv,../bib/paper}
%
% <OR> manually copy in the resultant .bbl file
% set second argument of \begin to the number of references
% (used to reserve space for the reference number labels box)
%
\bibliographystyle{IEEEtran}
{\footnotesize
	\bibliography{references.bib}}

% biography section
% 
% If you have an EPS/PDF photo (graphicx package needed) extra braces are
% needed around the contents of the optional argument to biography to prevent
% the LaTeX parser from getting confused when it sees the complicated
% \includegraphics command within an optional argument. (You could create
% your own custom macro containing the \includegraphics command to make things
% simpler here.)
%\begin{biography}[{\includegraphics[width=1in,height=1.25in,clip,keepaspectratio]{mshell}}]{Michael Shell}
% or if you just want to reserve a space for a photo:

% You can push biographies down or up by placing
% a \vfill before or after them. The appropriate
% use of \vfill depends on what kind of text is
% on the last page and whether or not the columns
% are being equalized.

%\vfill

% Can be used to pull up biographies so that the bottom of the last one
% is flush with the other column.
%\enlargethispage{-5in}

% that's all folks
\end{document}